\title{A Simplest Undecidable Modal Logic
\thanks{Supported in part by  NSF grant IIS-0713061, the DAAD postdoc program, and by a Friedrich Wilhelm Bessel Research Award. Work done in part while
the second author was at the Rochester Institute of Technology.}}
\author{Edith Hemaspaandra \and Henning Schnoor}
\institute{
  Department of Computer Science,
  Rochester Institute of Technology,
  Rochester, NY 14623, U.S.A.
  \email{eh@cs.rit.edu, hs@cs.rit.edu}
  \and
  Institute for Computer Science,
  Christian-Albrechts-Universit\"at zu Kiel, Kiel, Germany
}
\newcommand{\set}[1]{\ensuremath\left\{#1\right\}}
\newcommand{\card}[1]{\left| #1 \right|}
\newcommand{\var}[1]{\ensuremath{\mathrm{VAR}\!\left(#1\right)}}
\newcommand{\nmodels}{\ensuremath{\not\models}}
\newcommand{\sat}[1]{#1$-$\textrm{\sf{SAT}}}
\newcommand{\md}[1]{\mathsf{md}\left(#1\right)}
\newcommand{\mathtext}[1]{\ensuremath{\mathrm{\text{#1}}}}
\newcommand{\decisionproblem}[3]{
\medskip
\vspace*{1mm}
\begin{tabular}{ll}
\textit{Problem:} & #1 \\
\textit{Input:} & #2 \\
\textit{Question:} & #3
\end{tabular}
\smallskip
\vspace*{1mm}
}
\newcommand{\complexityclassname}[1]{\ensuremath{\mathrm{#1}}}
\newcommand{\complementclass}[1]{\complexityclassname{co}#1}
\newcommand{\RE}{\complexityclassname{RE}}
\newcommand{\CORE}{\complementclass{\RE}}
\newcommand{\NP}{\complexityclassname{NP}}
\newcommand{\logicname}[1]{\ensuremath{\mathsf{#1}}}
\newcommand{\K}[1]{\ensuremath{\logicname{K}(#1)}}
\newcommand{\F}{\mathcal F}
\newcommand{\ggrid}{\ensuremath{\mathrm{Global\text-Grid\text-Sat}}}
\newcommand{\formulaEQ}{\hat\varphi^\sim_{\mathtext{eq}}}
\newcommand{\formulaOneStep}{\hat\varphi_{1\mbox{-}\mathtext{step}}}
\newcommand{\formulaTwoStep}{\hat\varphi_{2\mbox{-}\mathtext{step}}}
\newcommand{\formulaEQOneTwoStep}{\hat\varphi^{\mathtext{grid}}}
\newcommand{\formulaUniversalWorld}{\hat\varphi^{\mathtext{univ}}}
\newcommand{\formulaEQOneTwoStepKernel}{\formulaEQOneTwoStep_{\mathtext{kernel}}}
\newcommand{\formulaEQOneTwoStepRefl}{\formulaEQOneTwoStep_{\mathtext{refl}}}
\newcommand{\formulaFinal}{\hat\varphi^{\mathtext{final}}}
\newcommand{\formulaRespectP}{\psi^{P}_{\mathtext{resp}}}
\newcommand{\formulaExistsNontrivialSuccessor}{\psi^{\mathtext{succ}}}
\newcommand{\gd}{_{\mathtext{grid}}}
\newcommand{\hvarphi}{\hat\varphi}
\newcommand{\hpsi}{\hat\psi}
\begin{document}

\maketitle

\begin{abstract}
Modal logics are widely used in computer science. The complexity of their satisfiability problems has been an active field of research since the 1970s. We prove that even very ``simple'' modal logics can be undecidable: We show that there is an undecidable modal logic that can be obtained by restricting the allowed models with a first-order formula in which only universal quantifiers appear.
\end{abstract}

\section{Introduction}

Modal logics are widely used in many areas of computer science. See, for example, \linebreak \cite{bazh05,frhuje02,codofl03,hamotu88,lare86,bega04,Moody-MODAL-LOGIC-FOR-DISTRIBUTED-COMPUTATION-CMU-2003,AucherBoellavdTorre-PRIVACY-WITH-MOCAL-LOGIC-DEON-2010}. The complexity of modal satisfiability problems has been an active field of research since Ladner's work in the 1970s~\cite{lad77}.  Early work focused on the complexity of single logics, but more recent work has focused on handling the computability and complexity for an infinite number of logics, see,  e.g., \cite{HemaspaandraSchnoor-Elementary-STACS-2008,DemriKonikowska-RELATIVE-SIMILARITY-LOGICS-DECIDABLE-JELIA-1998,anbene98}.

Our ultimate goal is to classify the complexity of ``all'' modal logics. We are particularly interested in elementary modal logics, i.e., modal logics whose models are defined by a first-order formula, since modal logics used in practice are often defined in this way. In addition, complexity analysis is often easier when looking at models rather than at axioms.

An important first step towards classifying the complexity of all elementary modal logics is to determine which modal logics are decidable. Our concrete goal in this paper is to find a ``simplest'' undecidable modal logic. 

A particularly interesting simple class of elementary modal logics are the universal elementary modal logics, in which the class of models is defined by a universal first-order formula.  Not only do many common modal logics belong to this class, it is also a class that is on the borderline of being decidable. In particular, it is known that many universal Horn elementary uni-modal logics are in PSPACE, and it is conjectured that all these logics are decidable~\cite{HemaspaandraSchnoor-Elementary-STACS-2008}. On the other hand, it is known that {\em global} satisfiability, i.e., satisfiability in all worlds in a model, is undecidable for universal elementary uni-modal logic~\cite{Hemaspaandra-PRICE-OF-UNIVERSALITY-NDJFC-1996}.

To show undecidability for modal logics, we typically need to be able to make the models look like a grid, and we need a way to access all the worlds in the grid, as well as a way to access the two direct successors of a world. An early example of this is that 2-dimensional temporal logic on $\mathbb N\times\mathbb N$ with modal operators $\bigcirc_u$ and $\bigcirc_r$ that access the direct ``up'' and ``right'' successors of a world, and $\Box_u$ and $\Box_r$ that access all worlds that are above or to the right of the current world is undecidable~\cite{Harel-RECURRING-DOMINOES-FCT-1983}. The undecidability also holds if $\Box_u$ and $\Box_r$ are the only modal operators~\cite{Spaan-NEXTTIME-NOT-NECESSARY-TARK-1990}. Other examples of undecidable modal logics include various logics of knowledge and time~\cite{HalpernVardi-COMPLEXITY-REASONING-KNOWLEDGE-TIME-JCSS-1989,LadnerReif-LOGIC-DISTRIBUTED-PROTOCOLS-TARK-1986,Spaan-NEXTTIME-NOT-NECESSARY-TARK-1990}, various logics that allow identification of worlds~\cite{bla-spa:j:modal-perspective,GarvonGoranko-MODAL-LOGIC-WITH-NAMES-JPL-1993}, and most products of transitive modal logics~\cite{Gabelaia}.

All these logics are multi-modal and certainly not universal. As will be explained in Section~\ref{sect:grid}, universal first-order formulas are not enough to enforce a grid-like structure. Still, there exists a first-order universal formula such that {\em global} satisfiability for uni-modal logics is undecidable ~\cite{Hemaspaandra-PRICE-OF-UNIVERSALITY-NDJFC-1996}. Section~\ref{sect:previous} will explain the idea behind the construction from~\cite{Hemaspaandra-PRICE-OF-UNIVERSALITY-NDJFC-1996} and why this construction can not be used to show that satisfiability for a universal elementary modal logic is undecidable.

However, using a much more complicated construction (see Section~\ref{sect:grid} for a high-level overview and the rest of Section~\ref{sect:proof} and the appendix for details), we will show that there exists an undecidable universal elementary uni-modal logic.  And this result holds even if we are not allowed to use the equality predicate in the first-order universal formula describing the models.   And so we have indeed found a ``simplest'' undecidable modal logic. 

\section{Preliminaries}\label{sect:preliminaries}

Modal logic syntactically extends propositional logic with an additional unary operator $\Box$ (a dual operator $\Diamond$ abbreviates $\neg\Box\neg$). The \emph{modal depth} of a formula $\varphi$, denoted with $\md\varphi$, is the maximal nesting degree of the $\Box$-operator in $\varphi$. A model for a modal formula is a graph with individual propositional assignments. To be precise, a \emph{frame} $F$ is a directed graph $(W,R),$ where the vertices in $W$ are called ``worlds.'' A \emph{model} $M=(W,R,\pi)$ consists of a frame $(W,R)$ and a function $\pi$ assigning to each variable the set of worlds in which $x$ is true. We say the model $M$ is \emph{based} on the frame $(W,R).$ If $\F$ is a class of frames, then a model is an $\F$-model if it is based on a frame in $\F$. We often write $w\in M$ instead of $w\in W$. For a world $w\in M,$ we define when a modal formula $\phi$ is \emph{satisfied} at $w$ in $M$ (written $M,w\models\phi$). If $\phi$ is a variable $x,$ then $M,w\models\phi$ if and only if $w\in\pi(x).$ As usual, $M,w\models\phi_1\wedge\phi_2$ if and only if $M,w\models\phi_1$ and $M,w\models\phi_2$, and $M,w\models\neg\phi$ iff $M,w\nmodels\phi.$ For the modal operator, $M,w\models\Box\phi$ if and only if $M,w'\models\phi$ for all worlds $w'$ with $(w,w')\in R$.

A standard way to define classes of frames is to use first-order formulas. The \emph{frame language} is the first-order language containing (in addition to Boolean operators) the binary relation $R$, interpreted as the edge relation in the graph, and the equality relation $=$. The semantics are defined in the obvious way. For example, a frame satisfies the formula $\hvarphi_{\mathrm{trans}}:=\forall xyz (x R y\wedge y R z\implies x R z)$ if and only if it is transitive. We use $\hat\ $ to denote first-order formulas, e.g., $\hvarphi$ is a first-order formula, while $\phi$ is a modal formula. We say that a frame is a $\hpsi$-frame if it satisfies the first-order formula $\hpsi$. We say that a model $M$ is a $\hpsi$-model (which we write as $M\models\hpsi$) if $M$ is based on a $\hpsi$-frame. The \emph{basic frame language} is the frame language without equality.

Following notation introduced in~\cite{HemaspaandraSchnoor-Elementary-STACS-2008}, for a first-order formula $\hpsi$, $\K{\hpsi}$ is the logic in which a modal formula $\phi$ is satisfiable if and only if there is a $\hpsi$-model $M$ and a world $w\in M$ such that $M,w\models\phi$. Such logics are called elementary modal logics. For a formula $\hpsi$ over the frame language, we consider the following problem:

\decisionproblem{$\sat{\K{\hpsi}}$}{A modal formula $\phi$}{Is there a $\hpsi$-model $M$ and a world $w\in M$ with $M,w\models\phi$?}

As an example, the problem $\sat{\K{\hvarphi_{\mathrm{trans}}}}$ is the problem of deciding if a given modal formula can be satisfied in a transitive frame, and is the same as the satisfiability problem for the logic $\logicname{K4}$. We say that a modal formula is \emph{globally} satisfied in a model if it holds at every world in the model.

The main result of this paper is that there exists a universal first-order formula $\formulaFinal$ over the basic frame language such that $\sat{\K\formulaFinal}$ is $\CORE$-complete.

\section{Relation with previous work}\label{sect:previous}

The result from the literature closest to ours is Theorem 3.2 from~\cite{Hemaspaandra-PRICE-OF-UNIVERSALITY-NDJFC-1996}, which shows the coRE-hardness of \emph{global} modal satisfiability for the class of frames defined by the following first-order universal formula (we omit quantification from first-order formulas, since in this paper every variable in every appearing first-order formula is universally quantified):
\begin{eqnarray*}
((xRy_1\wedge xRy_2\wedge xRy_3) & \rightarrow & ((y_1=y_2)\vee(y_1=y_3)\vee(y_2=y_3))) \ \ \  \wedge \\
\left(\bigwedge_{1\leq i\leq 4}(x R y_i \wedge y_i R z_i)\right. & \implies & \left.\bigvee_{1\leq i<j\leq 4}(z_i=z_j) \right).
\end{eqnarray*}

Our result strengthens the above result in several ways: We prove undecidability already for the (local) satisfiability problem, i.e., given a modal formula $\varphi$, decide whether there is a model and \emph{some} world in it that satisfies $\varphi$. Satisfiability is often much easier than global satisfiability: In the above example from~\cite{Hemaspaandra-PRICE-OF-UNIVERSALITY-NDJFC-1996}, (local) satisfiability is in \NP, while global satisfiability is undecidable. Intuitively, proving undecidability of the global satisfiability problem by encoding the grid is easier than undecidability for the local problem due to the following: A key property of the grid that needs to be enforced in the construction is that every world in a model has a specific set of reachable worlds (each world has exactly two successors reachable in one step, and three worlds that it can reach in two steps). Clearly, enforcing the existence of successors is impossible using only universally quantified formulas. With a modal formula however, requiring at least two direct successors can be enforced very easily by using a formula $\Diamond u\wedge\Diamond\overline{u}$. In the global satisfiability problem, one can require this formula to globally hold in the model, and hence enforce the existence of the grid structure relatively easily. In the (local) satisfiability problem however, there is no way to require this formula to globally hold in the model. This is the main reason why, at least in the context of classes of frames defined by universal first-order formulas, the global satisfiability problem allows us to express ``positive'' conditions (such as the existence of at least two successors) in a much more easy way than the satisfiability problem.

In our proof, we employ some of the techniques used to obtain the above-mentioned result in~\cite{Hemaspaandra-PRICE-OF-UNIVERSALITY-NDJFC-1996}. In particular, the first step in our proof is to establish undecidability of the global satisfiability problem over a class of frames that is similar to the one defined by the above formula. However, this step does much more than simply reproving the result from~\cite{Hemaspaandra-PRICE-OF-UNIVERSALITY-NDJFC-1996}: The class of frames that we construct here is considerably less restrictive than the one defined above. One reason for this is that with the more restricted language we have available (the basic frame language, which does not contain equality), we cannot restrict our frames as strongly as with the above formula from~\cite{Hemaspaandra-PRICE-OF-UNIVERSALITY-NDJFC-1996}. More importantly however, we need to construct a class which then can serve as the basis for proving hardness of (local) satisfiability. Therefore, we construct a class of frames that is tailor-made for being able to prove our main result later. A key issue here is that of reflexivity: In contrast to the grid model, we establish undecidability for a class of frames that includes reflexive frames---in fact, the reflexive frames are those which will later allow us to reduce the global satisfiability problem to the (local) satisfiability problem. In particular, directly reducing global satisfiability for the class of frames defined by the formula above to (local) satisfiability for a class of frames defined by a universal first-order formula over the basic frame language does not appear to be easier than our approach.

Technically, in addition to our first-order formulas being more complex than those used in~\cite{Hemaspaandra-PRICE-OF-UNIVERSALITY-NDJFC-1996}, a main difference is the use of what we call \emph{abstraction} of a model: Essentially, our first-order formulas enforce the relevant conditions of the grid not in a model $M$ itself, but in a model $\nicefrac M\sim$ obtained from $M$ by compressing cliques of worlds into a single world.

\section{Main Result}\label{sect:core}
 
We show that modal satisfiability is undecidable already for a formula over the basic frame language in which every appearing variable is universally quantified:

\begin{theorem}
 There exists a universal first-order formula $\formulaFinal$ over the basic frame language such that $\sat{\K\formulaFinal}$ is $\CORE$-complete.
\end{theorem}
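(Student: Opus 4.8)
The plan is to prove both the upper and lower bounds. The $\CORE$ upper bound is the easier half: for a fixed universal first-order formula $\hpsi$ over the basic frame language, the property ``$M$ is a $\hpsi$-model'' is preserved under taking substructures, and by a standard filtration/selection argument a satisfiable modal formula $\phi$ of modal depth $d$ can be satisfied in a $\hpsi$-model whose relevant portion (the worlds reachable from the point of evaluation within $d$ steps, together with enough witnesses) is finite and bounded in a computable way; the subtlety is that $\hpsi$ constrains the whole model, not just this finite portion, but since $\hpsi$ is universal it suffices to check it on the generated substructure, so satisfiability is recursively enumerable and hence $\sat{\K\hpsi}\in\CORE$. (Alternatively, one observes directly that the complement — unsatisfiability — is $\RE$ because one can enumerate finite candidate models and, since $\hpsi$ is universal, finitely verify the frame condition.)

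The heart of the matter is $\CORE$-hardness, and here the strategy follows the roadmap sketched in Sections~\ref{sect:previous} and~\ref{sect:grid}. First I would reduce from the complement of the (non-)tiling problem, via an intermediate problem $\ggrid$: global satisfiability of modal formulas over a suitable class of ``grid-like'' frames. The first major step is to design a universal first-order formula $\formulaEQOneTwoStep$ (over the basic frame language, \emph{no equality}) whose frames, after the \emph{abstraction} operation $M\mapsto\nicefrac M\sim$ that collapses each clique into a single world, carry the combinatorial skeleton of $\mathbb{N}\times\mathbb{N}$: every (abstract) world has essentially two one-step successors and three two-step successors, arranged so that the two ``diagonal'' composites coincide. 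Because equality is unavailable, these conditions cannot pin down successors on the nose; the abstraction trick is precisely what lets a universal formula speak about ``the successor world'' despite worlds being duplicable into cliques. I would then show that over this frame class, \emph{global} satisfiability of modal formulas encodes tiling, reproving and strengthening Theorem~3.2 of~\cite{Hemaspaandra-PRICE-OF-UNIVERSALITY-NDJFC-1996} — the strengthening being that the frame class is deliberately permissive, in particular it \emph{contains reflexive frames}.

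The second major step — and the one I expect to be the main obstacle — is to get rid of globality: to reduce global satisfiability over this class to \emph{local} satisfiability over a (possibly modified) universal class $\formulaFinal$. The difficulty, as the paper itself flags, is that a universal first-order formula can never force the \emph{existence} of successors, yet the grid encoding crucially needs ``each world has two successors,'' which in the global setting was for free via globally asserting $\Diamond u\wedge\Diamond\overline u$. The intended fix is to exploit the reflexive frames admitted by $\formulaEQOneTwoStep$: one adds a formula component (a ``universal world'' gadget, $\formulaUniversalWorld$, together with $\formulaRespectP$ and $\formulaExistsNontrivialSuccessor$) that, when satisfied \emph{locally} at some world, propagates the grid constraints to all relevant worlds much as globality did — essentially simulating a universal modality using reflexivity plus the structural constraints already baked into the frame class. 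I would then assemble $\formulaFinal = \formulaEQOneTwoStep \wedge (\text{reflexivity-exploiting components})$, verify it is still universal and equality-free, and show: if $\phi$ is globally satisfiable over the grid-like class then the translated formula is locally satisfiable over $\formulaFinal$-frames, and conversely a local $\formulaFinal$-model of the translated formula yields, via abstraction, a genuine tiling. Checking that the abstraction map interacts correctly with modal satisfaction (a bisimulation-style lemma between $M$ and $\nicefrac M\sim$ for the formulas we use), and that no ``degenerate'' local model can cheat the encoding, is where the technical weight lies; the undecidability of tiling then yields $\CORE$-hardness, and with the upper bound, $\CORE$-completeness.
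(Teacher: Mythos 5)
Your lower-bound outline follows the same route as the paper (reduce from \ggrid, use a universal equality-free formula whose models, after collapsing symmetric-edge cliques via $\nicefrac M\sim$, look grid-like, then remove globality with a universal-world gadget exploiting reflexivity), but the upper bound as you argue it is genuinely wrong. A filtration/selection argument giving a computably bounded ``relevant portion'' would yield decidability of $\sat{\K\hpsi}$, contradicting the very $\CORE$-hardness you are proving; the argument fails because you cannot simultaneously keep a \emph{substructure} of the original model (needed so that the universal $\hpsi$ survives) and the tree-like, depth-bounded selection needed to preserve modal satisfaction---selection arguments for $\logicname{K}$ work on unravelings, which are not substructures, and an induced substructure on selected worlds may contain extra edges between selection branches that break the induction for $\Box$-subformulas. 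Moreover ``satisfiability is recursively enumerable, hence in $\CORE$'' is a non sequitur (indeed for the constructed $\formulaFinal$, satisfiability is \emph{not} $\RE$), and the parenthetical alternative---enumerating finite candidate models---can at best semi-decide satisfiability, not unsatisfiability, and not even that here since the grid encodings are satisfiable only in infinite models. The correct (and standard) argument, which the paper uses, is via the standard translation: $\phi$ is unsatisfiable over $\hpsi$-frames iff the first-order implication from $\hpsi$ to the negated translation of $\phi$ is valid, and first-order validity is $\RE$ by completeness, so $\sat{\K\hpsi}\in\CORE$.

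On the hardness side, two concrete points. First, your assembled formula $\formulaFinal=\formulaEQOneTwoStep\wedge(\text{reflexivity-exploiting components})$ would fail as written: the universal world $w_u$ must have direct edges to all of the (infinitely many, pairwise $\sim$-inequivalent) grid worlds, which is incompatible with $\formulaOneStep$ and $\formulaTwoStep$ holding at $w_u$. The paper instead takes $\formulaFinal=\formulaUniversalWorld\wedge\formulaEQOneTwoStepRefl$, where the kernel of $\formulaEQOneTwoStep$ is guarded by $x_1Rx_1\wedge\dots\wedge x_kRx_k$, so the grid constraints apply only to reflexive worlds; $w_u$ is kept irreflexive by the local formula $u\wedge\Box\overline u$, while $\formulaUniversalWorld$ forces every world with an incoming edge to be reflexive and to be a direct successor of $w_u$, so a single $\Box$ at $w_u$ simulates globality on the generated submodel. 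Second, your sketch acknowledges but does not supply the mechanism that makes the abstraction sound and the grid ``positive'' conditions expressible: in the paper this is the mod-$8$ counter $d_8$ together with $\formulaRespectP$ (forcing $\sim$-equivalent worlds to agree on the relevant variables and on $d_8$), $\formulaExistsNontrivialSuccessor$ (forcing two successors in distinct $\sim$-classes, identified as the up/right neighbors via $+2$/$+3$), and the relativized translation of $\Box$ that quantifies only over successors with a different $d_8$-value. Without some such device, the bisimulation-style lemma between $M$ and $\nicefrac M\sim$ that you invoke simply does not hold, since nothing ties the valuations of $\sim$-equivalent worlds together, and nothing lets a universal, equality-free formula distinguish reflexive and symmetric edges from genuine grid steps.
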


It is well known and easy to show that modal satisfiability for a class of frames defined by a first-order formula is in $\CORE$ since this problem can be phrased as the negation of a first-order implication (see, e.g.,~\cite[Lemma 6.32]{blrive01}). It therefore remains to construct $\formulaFinal$ such that $\sat{\K\formulaFinal}$ is $\CORE$-hard.

We also mention that as an immediate corollary of the above, we obtain the following result: The \emph{uniform} version of the satisfiability problem, in which both the modal and the first-order formula are given in the input, is $\CORE$-complete.

\section{Proof of the main result}\label{sect:proof}

\subsection{Global satisfiability in the grid model}\label{sect:grid}

We prove \CORE-hardness by a reduction from the global grid satisfiability problem: The \emph{grid frame} has world set $\mathbb N\times\mathbb N$, and the accessibility relation $R=\set{((i,j),(i+1,j)),((i,j),(i+1,j))\mid i,j\in\mathbb N}$. A \emph{grid model} is a model based on the grid frame. The \emph{global grid satisfiability problem} is the following:
 
\decisionproblem{\ggrid}{A modal formula $\psi$}{Is there a grid-model $M$ that globally satisfies $\psi$?}

In \cite{Hemaspaandra-PRICE-OF-UNIVERSALITY-NDJFC-1996}, the following theorem was proven:

\begin{theorem}\label{theorem:ggrid is core hard}
 \ggrid\ is \CORE-hard, even restricted to inputs $\psi$ with $\md\psi\leq 1$.
\end{theorem}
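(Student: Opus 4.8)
The plan is to reduce from the complement of the halting problem. Given a deterministic Turing machine $M$, I would construct (in polynomial time in the description of $M$) a modal formula $\psi_M$ with $\md{\psi_M}\le 1$ such that $\psi_M$ is globally satisfiable in the grid model iff $M$ does not halt on the empty input. A grid-model that globally satisfies $\psi_M$ is meant to encode the whole space--time diagram of $M$'s computation: read the columns of $\mathbb N\times\mathbb N$ as successive configurations (time increasing to the right) and the rows as tape cells (space increasing upward), so that the propositional assignment at world $(i,j)$ records the content of cell $i$ at step $j$ --- the tape symbol there, whether the head is there, and if so its state. Since such a diagram for the entire quadrant exists exactly when $M$ never halts, the real work is to write ``the assignment is a legal, never-halting computation diagram'' as one depth-$1$ modal formula that is then asserted at every world; the converse direction (a non-halting run yields such a grid-model) and the polynomiality of the reduction are routine, and the matching upper bound $\ggrid\in\CORE$ follows from a standard compactness argument, since a depth-$1$ formula holds at every world of the grid iff every finite square can be consistently labelled.

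The essential obstacle --- and, to my mind, the whole content of the theorem --- is that the only modal operator is $\Box$, which at a world $(i,j)$ ranges over \emph{both} successors $(i+1,j)$ and $(i,j+1)$, and does so completely symmetrically in the two; the same is true of $\Diamond$. Hence a globally asserted depth-$1$ formula can only impose, at each world, a constraint relating the propositions true there to the \emph{pair} of propositional types of the two successors that is symmetric under swapping them. But legality of a computation diagram is highly asymmetric: the horizontal neighbour $(i+1,j)$ must carry the step-$j$ tape content shifted by one cell, whereas the vertical neighbour $(i,j+1)$ must carry the step-$(j{+}1)$ content obtained from $M$'s local transition rule. To break this symmetry I would add a constant number of auxiliary variables realising a coordinate-parity colouring, colouring $(i,j)$ by $(i\bmod 2,\,j\bmod 2)$, and enforce it by depth-$1$ global constraints of the shape ``each successor's colour differs from mine in exactly one coordinate, and the two successors differ in which coordinate they change''. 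Using the confluence of the grid --- the two successors of any world share a \emph{unique} common successor --- one wants to argue that such constraints pin the colouring down rigidly enough that every world can read off which successor flips the first coordinate (the horizontal one) and which flips the second (the vertical one). Once the two successors can be named apart by colour, the horizontal-shift condition and the vertical-transition condition each take the depth-$1$ form $\bigwedge_t\bigl(p_t\to\Diamond(\text{``I am the horizontal (resp.\ vertical) successor''}\wedge\bigvee_{t'\text{ legal}}p_{t'})\bigr)$ together with the matching $\Box$-condition; ``exactly one symbol and at most one head-state per world'' and ``the halting state never occurs'' are purely propositional; and the blank initial configuration is imposed by further ``border'' markers that, by the adjacency constraints, can sit only along the bottom row and left column.

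I expect the main difficulty to be exactly the rigidity of this colouring together with the pinning of the initial configuration: because a single $\Box$ cannot at depth $1$ actually certify ``this successor is the horizontal one'', one must rule out ``deformed'' global solutions in which the horizontal and vertical roles get swapped along part of the grid, and must ensure that the border markers genuinely force a blank initial row and column rather than floating free. Making the confluence argument strong enough that the only assignments surviving all constraints are the intended parity colourings on a true computation tableau --- up to the irrelevant global swap of the two axes and shift of parities --- is where the construction has to be careful, and this is precisely the price one pays for insisting on $\md{\psi}\le 1$ and on a single modality. The remaining pieces are comparatively mechanical.
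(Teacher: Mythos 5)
The paper does not actually prove Theorem~\ref{theorem:ggrid is core hard}; it imports it from~\cite{Hemaspaandra-PRICE-OF-UNIVERSALITY-NDJFC-1996}. Your general plan --- globally asserted depth-$1$ constraints on the fixed grid frame encoding a tiling/space--time diagram, plus auxiliary variables to break the up/right symmetry of the single $\Box$ --- is the right kind of argument, and it is in the spirit of the known construction (this paper itself uses such a device later, namely $d_8(i,j)=3i+2j\bmod 8$). But your concrete symmetry-breaker has a genuine gap, at exactly the point you flag as the crux. Write $r(i,j),u(i,j)\in\{(1,0),(0,1)\}\subseteq\mathbb{Z}_2^2$ for the colour increments your constraints assign to the edges from $(i,j)$ to $(i{+}1,j)$ and $(i,j{+}1)$, so $r(i,j)+u(i,j)=(1,1)$, and say $(i,j)$ is \emph{straight} if $r(i,j)=(1,0)$ and \emph{swapped} otherwise. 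Consistency of the colour at the common successor $(i{+}1,j{+}1)$ gives $r(i,j)+u(i{+}1,j)=u(i,j)+r(i,j{+}1)$, i.e.\ $u(i{+}1,j)+r(i,j{+}1)=(1,1)$: it only forces $(i{+}1,j)$ and $(i,j{+}1)$ to have the same straight/swapped status, and says \emph{nothing} relating them to $(i,j)$, because $(1,0)+(0,1)=(0,1)+(1,0)$. Hence the straight/swapped choice can be an arbitrary function of the anti-diagonal $i+j$; with an alternating choice, the induced ``horizontal''/``vertical'' successor maps read off from the colours fail to commute (one gets, locally, $H\circ H=V\circ V$), so a globally satisfying model need not encode any legal tableau and the direction ``$\psi_M$ globally satisfiable $\Rightarrow$ $M$ does not halt'' is not established; the confluence argument you hope for cannot close this, since square-consistency is precisely the equation above. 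The standard cure is an \emph{asymmetric} counter: with $d_8(i,j)=3i+2j\bmod 8$ and the global constraints ``every successor's $d_8$ is mine $+2$ or $+3$, and both values occur among my successors,'' square-consistency becomes $3+u\equiv 2+r\pmod 8$ with $u,r\in\{2,3\}$, whose only solution is $u=2$, $r=3$; thus the orientation propagates from each world to both of its successors and, since all worlds are reachable from $(0,0)$, is globally constant up to one harmless global transposition.

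There is a second gap in the seeding of the computation. A globally asserted depth-$1$ formula constrains all worlds identically, and the forward-looking $\Box$ cannot tell border worlds from interior ones (every world of the grid has exactly two successors and looks the same in the forward direction), so nothing can force your border markers to be true anywhere: the assignment making them false everywhere satisfies all marker-guarded constraints vacuously, and you cannot positively demand them only on the bottom row and left column. Consequently, even if $M$ halts on empty input, a grid model could globally satisfy $\psi_M$ by encoding a legal but differently-seeded (or unseeded) pseudo-run, again breaking the critical direction. The usual remedy is not to seed at all: reduce from the unseeded $\mathbb{N}\times\mathbb{N}$ tiling (domino) problem, which is $\Pi^0_1$-complete (i.e.\ $\CORE$-complete) by Berger's theorem; then only the local horizontal/vertical matching conditions must be expressed, which the $d_8$-device handles at modal depth $1$. (Your compactness remark concerns membership in $\CORE$, which is not part of the statement being proved.)
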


Our proof will force models to behave ``essentially grid-like.'' However, it is immediately clear that universal first-order formulas are not expressive enough to accomplish this: It is easy to see that if $F$ is a frame, and $F'$ is a subframe of $F$ (i.e., an induced subgraph), then $F'$ satisfies every universal first-order formula over the frame language that $F$ satisfies. More precisely, universal first-order formulas \emph{with equality} can express exactly those graph properties that can be characterized with a finite forbidden subgraph. Hence with equality in our first-order language, we can express properties like ``each world has at most two direct successors,'' which is a characteristic property of the grid frame, using the formula $(xRy_1\wedge xRy_2\wedge xRy_3)\rightarrow((y_1=y_2)\vee(y_1=y_3)\vee(y_2=y_3))$.  Note that, as explained in Section~\ref{sect:previous}, this formula was used in the proof
of~\cite[Theorem 3.2]{Hemaspaandra-PRICE-OF-UNIVERSALITY-NDJFC-1996}.

As mentioned previously, requiring ``positive'' conditions, e.g., that every world in a model has direct successors with certain properties, is reasonably easy in the global satisfiability setting, but considerably more difficult in the local setting. In addition, even the above property of having at most two distinct successors cannot be expressed by a universal first-order formula in the \emph{basic} frame language, i.e., the frame language without equality: It is easy to see that no such formula can distinguish the singleton with two successors from the singleton with three successors. 

Therefore, in order to simulate the grid in our modal models, we need to exploit the interplay between the first-order and the modal aspect of the satisfiability problem. The main ingredients to our proof are the following:

\begin{enumerate}
 \item In order to circumvent having to express equality in the above formula, we employ the following construction: We use a first-order formula similar to the one above, except that instead of requiring $y_i=y_j$, we demand that there is a symmetric edge between these worlds with the first-order formula $(y_iRy_j)\wedge(y_jRy_i)$. With some additional requirements, we ensure that the relation $x\sim y$ is an equivalence relation, where $x\sim y$ holds if there is a symmetric edge between $x$ and $y$. We then ``abstract'' modal models to their $\sim$-equivalence classes. This construction allows us to ``simulate'' equality in the basic frame language. On the abstraction we can therefore express all ``forbidden subgraph'' properties of the grid, since as mentioned above, these properties can be expressed using universal first-order formulas with equality. 

 \item In order to ensure that the above-mentioned abstraction is sound, we use modal formulas to ensure that the relation $\sim$ is not only an equivalence relation, but also respects propositional assignments: For a relevant subset of variables, $\sim$-equivalent worlds will share the same valuation. This allows us to regard equivalence classes as single worlds in the abstracted model.
 \item For the structure of the model, it remains to express the ``positive'' properties of the grid, for example that every world indeed has two distinct successors. While the existence of successor worlds can be easily required with the modal operator $\Diamond$ (which in this step of the proof we can essentially use ``globally'' as we are still only dealing with the global satisfiability problem), we need to ensure that there exist successors in a different equivalence class---i.e., successors that are still present in the abstraction. We express this requirement using subtle interplay between first-order and modal requirements.
 \item Finally, the most important issue is to express the global requirement of \ggrid: The formula $\psi$ is required to hold at every world of the grid, whereas modal satisfiability is an existential property. To solve this, we force the existence of a ``universal'' world, i.e., a world connected to every other world. In this world, the global requirement ``$\psi$ must hold everywhere'' can be simulated with the (local) $\Box$ operator. In this part of the proof, we crucially rely on features of the class of models considered in the first part, which allow us to perform this construction.
\end{enumerate}

\subsection{Expressing the grid: Universal first-order aspects}

We follow the proof strategy outlined above, and start by defining a set of universal first-order formulas that force the ``abstraction'' of a model to obey some essential properties of the grid. The formulas are intuitively understood best when thinking of symmetric edges as ``equality.'' As mentioned above, we use $x\sim y$ to express that in a frame, there is an edge from $x$ to $y$, and one from $y$ to $x$ (the frame will always be clear from the context). We also use $x\sim y$ as an abbreviation for $(xRy \wedge yRx)$ in formulas.

\begin{definition}
Let $\formulaOneStep$ be the universal first-order formula 
$$(x R y_1\wedge x R y_2\wedge x R y_3)\implies\left(\bigvee_{1\leq i\leq 3}(x\sim y_i)\vee\bigvee_{1\leq i<j\leq 3}(y_i\sim y_j)\right).$$ 
\end{definition}

This formula corresponds to the property mentioned above: In the grid, each world has only two distinct successors. When reading $\sim$ as equality, the formula exactly captures this requirement. We use an analogous approach to state another important feature of the grid: While each world has two distinct successors, say $y_1$ and $y_2$, and each of these again has two distinct successors, say $z^1_1$, $z^1_2$, $z^2_1$ and $z^2_2$, in the grid each world can reach only three distinct worlds in two steps. Hence, two of the $z^i_j$ must coincide. This is expressed by the following formula:

\begin{definition}
 Let $\formulaTwoStep$ be the universal first-order formula 
 $$\left(\bigwedge_{1\leq i\leq 4}(x R y_i \wedge y_i R z_i)\right)\implies\left(\bigvee_{1\leq i\leq 4}(x\sim y_i)\vee\bigvee_{1\leq i\leq 4}(y_i\sim z_i)\vee\bigvee_{1\leq i<j\leq 4}(z_i\sim z_j)\right).$$
\end{definition}

When reading $\sim$ as equality, this formula states that if $z_1,z_2,z_3,z_4$ are worlds reachable from $x$ via intermediate worlds $y_i$ (which are different from both $x$ and all $z_i$), then two of the $z_i$ must coincide. Note that the formulas introduced up to now closely mirror the formula in~\cite{Hemaspaandra-PRICE-OF-UNIVERSALITY-NDJFC-1996}. The major differences and additions in the construction follow now. The third formula we use ensures the relation $\sim$ mentioned above is an equivalence relation:

\begin{definition}
 Let $\formulaEQ$ be the first-order formula
 $$\big((x\sim y)\wedge y R z)\implies x R z\big)\wedge\big((x\sim y)\wedge z R y)\implies z R x\big).$$
\end{definition}

This formula ensures that $\sim$-``equivalent'' worlds have the exact same in- and outgoing edges, hence it forces $\sim$ to be an equivalence relation in reflexive models. The conjunction of the formulas above combines the first-order aspects of the grid that we can force with universal formulas over the basic frame language:

\begin{definition}
 Let $\formulaEQOneTwoStep:=\formulaOneStep\wedge\formulaTwoStep\wedge\formulaEQ$.
\end{definition}

\subsection{Properties of abstracted frames}

We now formally define abstractions of frames, which as mentioned before are obtained by compressing $\sim$-equivalence classes into a single world.

\begin{definition}
For a reflexive frame $F=(W,R)$, where $F\models\formulaEQ$, we define the \emph{abstraction of $F$}, denoted with $\nicefrac{F}{\sim}$, to be the frame $\left(\nicefrac{W}{\sim},\nicefrac{R}{\sim}\right),$ where $\nicefrac{W}{\sim}$ is the set of $\sim$-equivalence classes of $W$, and $[w]\nicefrac{R}{\sim}[w']$ if and only if $w R w'$.  
\end{definition}

The relation $\nicefrac{R}{\sim}$ above is well-defined, since if $w\sim w'$ and $\hat w\sim\hat{w}',$ then $w R \hat w$ implies $w' R \hat w,$ and this implies $w' R \hat{w}'$ (by $\formulaEQ$). We now show that for a frame that satisfies the formula $\formulaEQOneTwoStep$, its abstraction has two key properties of the grid frame, and additionally is reflexive---reflexivity does not actually help in establishing the undecidability result for global satisfiability, but as commented before will make it easier for us to later move to the (local) satisfiability problem.

\begin{lemma}\label{lemma:structucal properties of frames satisfying basic formula}
 Let $F=(W,R)$ be a reflexive $\formulaEQOneTwoStep$-frame. Then $\nicefrac F\sim$ satisfies the following properties:
\begin{enumerate}
 \item $\nicefrac F\sim$ is reflexive,
 \item each world $[w]$ in $\nicefrac F\sim$ has at most two direct successors different from $[w]$,
 \item for each world $[w]$ in $\nicefrac F\sim$, there  are at most three
worlds that can be reached on a path from $[w]$ of length two that does not use any reflexive edge.
\end{enumerate}
\end{lemma}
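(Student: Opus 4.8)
The plan is to unwind the definitions and verify each of the three properties of $\nicefrac F\sim$ directly, using the three conjuncts of $\formulaEQOneTwoStep$ together with the hypothesis that $F$ is reflexive. Throughout, I would keep in mind the intended reading ``$\sim$ is equality,'' so that each first-order conjunct says exactly the grid-property it was designed to express, once lifted to equivalence classes; the real content is checking that the lifting is legitimate.

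\emph{Property 1 (reflexivity).} This is immediate: for any $w\in W$ we have $wRw$ since $F$ is reflexive, hence $[w]\nicefrac R\sim[w]$ by definition of $\nicefrac R\sim$.

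\emph{Property 2 (at most two distinct successors).} Suppose, for contradiction, that some $[w]$ has three pairwise distinct successors $[y_1],[y_2],[y_3]$, all different from $[w]$. Pick representatives; then $wRy_i$ for $i=1,2,3$ (using that $\nicefrac R\sim$ is well-defined, and possibly replacing $y_i$ by an element of its class — here I would note that $wRy_i$ holds for \emph{some}, hence by $\formulaEQ$ for \emph{every}, member of $[y_i]$, since $w R y_i \wedge y_i \sim y_i'$ forces $w R y_i'$). Apply $\formulaOneStep$ with $x:=w$: one of the disjuncts $x\sim y_i$ or $y_i\sim y_j$ must hold. If $w\sim y_i$ then $[w]=[y_i]$, contradicting that $[y_i]\neq[w]$; if $y_i\sim y_j$ for some $i<j$ then $[y_i]=[y_j]$, contradicting distinctness. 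Either way we get a contradiction, so at most two distinct non-reflexive successors exist.

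\emph{Property 3 (at most three worlds at reflexive-edge-free distance two).} Suppose for contradiction there are four pairwise distinct classes $[z_1],\dots,[z_4]$ each reachable from $[w]$ by a length-two path in $\nicefrac F\sim$ avoiding reflexive edges, i.e.\ there are $[y_i]$ with $[w]\nicefrac R\sim[y_i]\nicefrac R\sim[z_i]$, where $[w]\neq[y_i]$ and $[y_i]\neq[z_i]$ (``avoiding reflexive edges'' must be translated into these non-equalities — I should make explicit that a reflexive-edge-free path of length two in $\nicefrac F\sim$ means precisely the two endpoints of each of its edges are in distinct classes). Choose representatives with $wRy_i$ and $y_iRz_i$ (again using $\formulaEQ$ to move between representatives freely). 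Apply $\formulaTwoStep$ with $x:=w$: some disjunct $x\sim y_i$, $y_i\sim z_i$, or $z_i\sim z_j$ holds. The first two contradict the non-equalities $[w]\neq[y_i]$, $[y_i]\neq[z_i]$; the last contradicts the distinctness of the $[z_j]$. Contradiction.

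The main obstacle — really the only subtlety, but one worth spelling out — is the bookkeeping around representatives: $\nicefrac R\sim$ is defined by $[w]\nicefrac R\sim[w']$ iff $wRw'$, and one must check this does not depend on the chosen representatives, which is exactly the well-definedness argument already given in the text just before the lemma (an application of $\formulaEQ$). Beyond that, the one-step and two-step formulas are quantified over \emph{all} tuples $x,y_1,\dots$ in $F$, so instantiating them at representatives of the equivalence classes is unproblematic, and the passage ``$[a]=[b]$ iff $a\sim b$'' is immediate from the definition of the quotient. The only place where care is genuinely required is matching the informal phrase ``path of length two that does not use any reflexive edge'' in Property 3 with the formal non-equality hypotheses $x\not\sim y_i$ and $y_i\not\sim z_i$ that let the disjuncts of $\formulaTwoStep$ collapse to a contradiction; I would state this translation explicitly at the start of that case.
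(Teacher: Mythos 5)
Your proposal is correct and follows essentially the same argument as the paper: reflexivity of $\nicefrac F\sim$ is immediate from $wRw$, and properties 2 and 3 are proved by contradiction by instantiating $\formulaOneStep$ and $\formulaTwoStep$ at representatives (legitimate by the $\formulaEQ$-based well-definedness of $\nicefrac R\sim$) and observing that each disjunct $a\sim b$ collapses two classes that were assumed distinct. The only difference is cosmetic: the paper rules out the disjuncts $x\sim y_i$ and $y_i\sim z_i$ by noting the backward edges $y_iRx$, $z_iRy_i$ cannot exist, whereas you do a direct case analysis on the disjuncts; the content is identical.
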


\begin{proof}
 \begin{enumerate}
  \item Let $[w]$ be an equivalence class. Since $R$ is reflexive, we know that $w R w$, hence due to the definition of $\nicefrac{(W,R)}{\sim}$, we have that $[w]\nicefrac R\sim[w]$.
  \item Assume that some equivalence class $[w]$ has direct successors $[w_1]$, $[w_2]$, and $[w_3]$ in $\nicefrac{(W,R)}{\sim}$, where all of these four classes are distinct. Obviously, for $i\in\set{1,2,3}$, we have that $([w_i],[w])\notin\nicefrac{R}{\sim}$, since otherwise $[w_i]=[w]$ would follow due to the definition of $\sim$. By the properties of $\formulaOneStep,$ it then follows that there are $i,j$ such that $1 \leq i < j \leq 3$ and $w_i R w_j$ and $w_j R w_i,$ thus $w_i\sim w_j,$ and hence $[w_i]=[w_j],$ a contradiction.
  \item This follows very similarly: Assume that there is a class $[x]$ with direct successors $[y_1]$, $[y_2]$, $[y_3]$, and $[y_4]$, and classes $[z_1]$, $[z_2]$, $[z_3]$, $[z_4]$ such that for all $i,j$ we have $[y_i]\neq[x]$, $[y_i]\nicefrac R \sim [z_i]$, $[y_i]\neq[z_i]$, and for $i\neq j$ we have $[z_i]\neq[z_j]$. Since $[x] \nicefrac R\sim [y_i]$, there is an edge $xRy_i$ for all $i$. If there was an edge $y_i R x$ for some $i$, then $y_i\sim x$ and thus $[x]=[y_i]$ would follow, hence there is no such edge. Similarly, since $y_i \nicefrac R\sim z_i$ but $[y_i]\neq[z_i]$, there is no edge $z_i R y_i$. Therefore, since $(W,R)\models\formulaTwoStep$, we know that there exist $i\neq j$ with $z_i Rz_j$ and $z_j Rz_i$, i.e., $z_i\sim z_j$. This implies $[z_i]=[z_j]$, a contradiction.
 \end{enumerate}
\end{proof}

We therefore know that the abstracted frame satisfies the main ``forbidden subgraph'' properties of the grid frame---as mentioned before, we cannot hope to enforce other properties of the grid frame using only universal first-order formulas. For ensuring the remaining properties, we therefore use modal formulas and propositional variables. There are two main differences between the grid frame and the class of abstractions of $\formulaEQOneTwoStep$-frames: First, worlds in abstractions of $\formulaEQOneTwoStep$-frames that correspond to more than one world in the original frame are always reflexive. Second, the class of $\formulaEQOneTwoStep$-frames (and thus the class of its abstractions) is subframe-closed. Essentially, abstractions of $\formulaEQOneTwoStep$-frames can be seen as subframes of the reflexive closure of the grid frame.

\subsection{Expressing the grid: Modal aspects}

Above, we have expressed the universal first-order properties of the grid frame in the formula $\formulaEQOneTwoStep$. We now consider the second part of our abstraction process: the valuation of the propositional variables. We will use the interplay between modal formulas and the frame properties ensured with the results in the previous section to address the following issues:

\begin{itemize}
 \item We need to ensure that the abstraction is ``sound,'' i.e., leaves crucial modal properties of the models invariant. The main issue is that we need to force all ``relevant'' variables to have the same value in all $\sim$-equivalent worlds. This leads to a well-defined propositional assignment for the abstracted frame, and moreover ensures that truth of ``relevant'' modal formulas remains invariant when moving to the abstraction of a model.
 \item As mentioned above, the abstractions constructed in the previous section are necessarily reflexive. To simulate the (non-reflexive) grid, we will replace the $\Box$-operator with a construction that ensures that $\Box\psi$ only requires the formula $\psi$ to be true in the successors $w'\neq w$ of a world $w$.
 \item We have to enforce the ``positive'' properties of the grid frame, i.e., that every world in fact does have two distinct successors.
\end{itemize}

Our abstraction does not take into account all variables, but only a subset denoted with $P$---this will contain all variables appearing in the input formula $\psi$ for the $\ggrid$ problem that we will reduce from.

\begin{definition}
Let $P$ be a set of propositional variables, and let $M = (W,R,\pi)$ be a
reflexive $\formulaEQ$-model. We define the model $\nicefrac{M}{\sim}$ as

$$\nicefrac{M}{\sim}:=\left(\nicefrac{W}{\sim},\nicefrac{R}{\sim},\nicefrac{\pi}{\sim}\right),$$
where the assignment $\nicefrac{\pi}{\sim}$ makes the variable $p$ true in an equivalence class $[w]$ if and only if it is true in all elements of $[w]$, and lets all propositional variables not in $P$ be false everywhere. We call $\nicefrac M\sim$ the \emph{abstraction} of $M$.
\end{definition}

We will later only consider abstractions of models where the truth value of a variable $p\in P$ does not depend on the representative of a class $[w]$. This will be enforced to ensure that $\nicefrac{M}{\sim}$ is a sound abstraction of with respect to the propositional variables in $P$. For models that obey this restriction, the above truth assignment will then be equivalent to setting $p$ true in $[w]$ if there is some $w\in[w]$ where $p$ is true in the original model. We say that \emph{$\sim$ respects $P$ in a model $M$} if $w\sim w'$ implies that each variable $p\in P$ is true in $w$ if and only if it is true in $w'$. We often omit the model $M$ if clear from the context. 

It is easy to see that this property implies that the abstraction introduced is indeed sound, i.e., preserves truth of all modal formulas---at least as long as we only consider reflexive models that also satisfy the first-order formula $\formulaEQ$:

\begin{lemma}\label{lemma:modal formulas invariant if sim respects variables}
 Let $P$ be a set of propositional variables, let $M$ be a reflexive $\formulaEQ$-model such that $\sim$ respects $P$ in $M$. Then for all $w\in W,$ and all modal formulas $\psi$ with $\var{\psi}\subseteq P$, $M,w\models\psi\mathtext{ if and only if }\nicefrac{M}{\sim},[w]\models\psi$.
\end{lemma}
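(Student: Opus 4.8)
The plan is to prove this by structural induction on the modal formula $\psi$ with $\var\psi\subseteq P$. The statement I want is that for every $w\in W$, $M,w\models\psi$ iff $\nicefrac M\sim,[w]\models\psi$. The base case (variables) and the Boolean cases ($\wedge$, $\neg$) are immediate: for a variable $p\in P$, since $\sim$ respects $P$ in $M$, the value of $p$ is constant across $[w]$, so by definition of $\nicefrac\pi\sim$ we have $[w]\in\nicefrac\pi\sim(p)$ iff $w\in\pi(p)$; the Boolean connectives commute with the translation using the induction hypothesis. So the only real content is the $\Box$-case.

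For $\psi=\Box\chi$, I would argue both directions. Suppose $M,w\models\Box\chi$, and let $[w']$ be any successor of $[w]$ in $\nicefrac M\sim$, i.e.\ $[w]\nicefrac R\sim[w']$. By definition of $\nicefrac R\sim$ this means there exist representatives (indeed, by well-definedness, $wRw'$ itself) with $wRw'$; then $M,w'\models\chi$, and by the induction hypothesis $\nicefrac M\sim,[w']\models\chi$. Hence $\nicefrac M\sim,[w]\models\Box\chi$. Conversely, suppose $\nicefrac M\sim,[w]\models\Box\chi$, and let $w'\in W$ with $wRw'$. Then $[w]\nicefrac R\sim[w']$, so $\nicefrac M\sim,[w']\models\chi$, and the induction hypothesis gives $M,w'\models\chi$. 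Thus $M,w\models\Box\chi$. The key point making both directions go through cleanly is that the edge relation of the abstraction is defined directly by $[w]\nicefrac R\sim[w']\iff wRw'$, so successors upstairs and downstairs correspond exactly; no use of reflexivity or of $\formulaEQ$ is even needed in the $\Box$-step itself.

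Where the hypotheses actually get used is more subtle and worth a remark: reflexivity and $\formulaEQ$ are what guarantee that $\nicefrac M\sim$ is a well-defined object at all — $\formulaEQ$ ensures $\nicefrac R\sim$ is well-defined (as already noted after the definition of abstraction), and reflexivity makes $\sim$ an equivalence relation so that $\nicefrac W\sim$ is a genuine partition. And the hypothesis that $\sim$ respects $P$ is exactly what makes $\nicefrac\pi\sim$ assign each $p\in P$ consistently, which is the base case. So in the write-up I would first note that these hypotheses ensure the abstraction is well-defined and then do the induction, spending essentially all the detail on the (short) $\Box$-case.

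I do not expect any real obstacle here — the lemma is a routine ``bisimulation-style'' invariance argument, and the definition of $\nicefrac R\sim$ has been rigged precisely so that the quotient map $w\mapsto[w]$ behaves like a bounded morphism that is also injective on the relevant propositional content. The only thing to be careful about is not to over-claim: the statement is restricted to reflexive $\formulaEQ$-models with $\sim$ respecting $P$ and to formulas over $P$, and each of these restrictions is used (the first two for well-definedness of $\nicefrac M\sim$, the third and $\var\psi\subseteq P$ for the base case), so I would make sure the proof visibly invokes each.
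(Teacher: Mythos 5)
Your proof is correct and follows essentially the same route as the paper's: structural induction with the only substantive case being $\Box$, resolved by the exact correspondence $wRw'$ iff $[w]\,\nicefrac{R}{\sim}\,[w']$, with the base case resting on $\sim$ respecting $P$. Your added remarks on where reflexivity and $\formulaEQ$ are used (only for well-definedness of the abstraction) are accurate and consistent with the paper's discussion following the definition of $\nicefrac{F}{\sim}$.
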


\begin{proof}
 We prove the claim by induction on the construction of $\psi.$ If $\psi$ is a propositional variable $p\in P$, then this follows from the prerequisites, since for $w\sim w',$ $p$ is true at $w$ if and only if $p$ is true at $w'.$ The induction step for propositional operators is trivial. Therefore, assume that $\psi$ is of the form $\Box\xi,$ and the claim holds for $\xi.$ Since $w R w'$ is true if and only if $[w] \nicefrac{R}{\sim} [w'],$ the following is true:

\begin{tabular}{lll}
 $M,w\models\Box\xi\ $ & iff\ \ & for all $w'$ with $w R w',$ it holds that $M,w'\models\xi$ \\
                       & iff    & for all $w'$ with $w R w'$ it holds that $\nicefrac{M}{\sim},[w']\models\xi$ \\
                       & iff    & for all $w'$ with $[w] \nicefrac{R}{\sim} [w']$ it holds that $\nicefrac{M}{\sim},[w']\models\xi$ \\
                       & iff    & $\nicefrac{M}{\sim},[w]\models\Box\xi.$
\end{tabular}
\end{proof}

We therefore obtain the following: If $M$ is a reflexive modal model that satisfies the formula $\formulaEQOneTwoStep$ and in which $\sim$ respects $P$, then its abstraction is a subframe of the reflexive closure of the grid that satisfies the same formulas $\psi$ as $M$ does (for formulas $\psi$ with $\var\psi\subseteq P$). Therefore, if we can enforce that $\sim$ respects $P$, we have can enforce the abstractions of our models to exhibit the ``forbidden subgraph''-features of the grid, without changing the set of satisfied modal formulas. To enforce that $\sim$ respects $P$, we define the following:

\begin{definition}
For a set $P$ of propositional variables, let $\formulaRespectP$ be defined as

$$
\displaystyle
\begin{array}{lcl}
\formulaRespectP & = & \bigwedge_{d=0}^7\bigwedge_{p\in P} 
((d_8 = d) \rightarrow \\
&& \ \ \Box ((d_8 = d) \vee (d_8 = (d + 2) \mod 8) \vee (d_8 = (d+3) \mod 8))\\
&& \ \ \wedge \  (p\implies\Box((d_8=d)\implies p)) \\
&& \ \ \wedge \  (\overline{p}\implies\Box((d_8=d)\implies\overline{p})))
\end{array}
$$
\end{definition}

Here $d_8$ is a variable over the natural numbers $0,\ldots,7$. This obviously can be represented by three propositional variables $d_8^a$, $d_8^b$, and $d_8^c$ as follows:

$$
\begin{array}{lll}
 d_8=0 & \leftrightarrow & (\overline{d_8^a}\wedge\overline{d_8^b}\wedge
\overline{d_8^c}), \\
 d_8=1 & \leftrightarrow & (\overline{d_8^a}\wedge\overline{d_8^b}\wedge
d_8^c), \ldots \\
 d_8=7 & \leftrightarrow & ({d_8^a}\wedge d_8^b \wedge d_8^c).
\end{array}
$$

For a model $M$ and a world $w\in M$, with $d_8(w)$ we denote the unique value $d\in\set{0,\ldots,7}$ such that $M,w\models (d_8=d)$. To increase readability, we often omit the ``$\mathtext{mod}\ 8$'' from the discussion and write $d_8 = d$ for $d_8 = d \mod 8$.

We now prove that the formula $\formulaRespectP$ works as intended, if it is \emph{globally} satisfied in a model---recall that a model $M$ globally satisfies a modal formula $\psi$ if $M,w\models\psi$ for all worlds $w$ of $M$.

\begin{lemma}\label{lemma:formula forces sim to respect P}
 Let $P$ be a set of propositional variables, let $M$ be a reflexive $\formulaEQOneTwoStep$-model that globally satisfies $\formulaRespectP$. Then $\sim$ respects $P\cup\set{d_8}$ in $M$.
\end{lemma}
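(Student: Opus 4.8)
The plan is to unpack what it means for $\sim$ to respect $P\cup\set{d_8}$ in $M$: I must show that for all worlds $w,w'$ with $w\sim w'$ --- i.e., $wRw'$ and $w'Rw$ --- one has $d_8(w)=d_8(w')$ and, for every $p\in P$, $M,w\models p$ exactly when $M,w'\models p$. Since $M$ globally satisfies $\formulaRespectP$, all three conjuncts of $\formulaRespectP$ (for every choice of $d$ and $p$) are available at every world, in particular at both $w$ and $w'$; the whole argument is just a careful application of these at $w$ and $w'$ using the two edges $wRw'$ and $w'Rw$.

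First I would handle the $d_8$-component. Put $d:=d_8(w)$ and $d':=d_8(w')$. The first conjunct of $\formulaRespectP$, applied at $w$ with value $d$ and using $wRw'$, forces $d'\in\set{d,d+2,d+3}\pmod 8$; applied at $w'$ with value $d'$ and using $w'Rw$, it forces $d\in\set{d',d'+2,d'+3}\pmod 8$, i.e.\ $d'\in\set{d,d+5,d+6}\pmod 8$. The two constraints together leave only $d'=d$, since $\set{0,2,3}\cap\set{0,5,6}=\set{0}$ modulo $8$. (This is exactly why the offsets $2$ and $3$ were chosen: modulo $8$, neither is its own negative, so the ``forward'' and ``backward'' reachable $d_8$-values overlap only in $d$ itself.) Hence $d_8(w)=d_8(w')$, which I will call $d$ from now on.

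Next, fix $p\in P$. If $M,w\models p$, then the second conjunct of $\formulaRespectP$ at $w$ (instantiated with $d$ and $p$) yields $M,w\models\Box((d_8=d)\implies p)$, and since $wRw'$ and $d_8(w')=d$ this gives $M,w'\models p$. Symmetrically, if $M,w'\models p$ then the same conjunct at $w'$ together with the edge $w'Rw$ and $d_8(w)=d$ gives $M,w\models p$; alternatively, the third conjunct propagates $\overline{p}$ in the reverse direction. Either way, $M,w\models p$ if and only if $M,w'\models p$. As $p\in P$ was arbitrary and the $d_8$-values have already been matched, this shows $\sim$ respects $P\cup\set{d_8}$ in $M$.

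I do not expect a real obstacle here: the only place any thought is required is the finite modular computation in the first step, and the rest is a mechanical use of the modal semantics of $\Box$. Note in particular that reflexivity of $M$ and the first-order formula $\formulaEQOneTwoStep$ play no role in this particular lemma --- they are used elsewhere --- so the argument relies only on the global validity of $\formulaRespectP$ together with the pair of symmetric edges supplied by $w\sim w'$.
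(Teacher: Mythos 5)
Your proof is correct and follows essentially the same route as the paper's: use the first conjunct of $\formulaRespectP$ along both edges $wRw'$ and $w'Rw$ to pin down $d_8(w')=d_8(w)$ by the mod-$8$ offset computation (the paper phrases this as a contradiction rather than an intersection of constraint sets), and then propagate each $p\in P$ with the second and third conjuncts. Your side remark that reflexivity and $\formulaEQOneTwoStep$ are not actually used here also matches the paper's proof, which likewise relies only on global satisfaction of $\formulaRespectP$.
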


\begin{proof}
 Let $w\sim w'$ in $M$, and let $d=d_8(w)$. We first prove that $M,w'\models(d_8=d)$, i.e., that the values of $d_8$ are identical in $\sim$-equivalent worlds. Suppose that this is not the case. Since $M$ globally satisfies $\formulaRespectP$, and there is an edge $w R w'$, we know that $M,w'\models(d_8 = (d+2))$ or $M,w'\models(d_8=(d+3))$. Since there also is an edge $w' R w$, it also follows from $\formulaRespectP$ that $M,w\models(d_8 \in \{d+2+2, d+2+3, d+3+2, d+3+3\})$, which is a contradiction, since $M,w\models(d_8=d)$. We therefore know that $M,w'\models(d_8=d)$ as claimed. Now let $p$ be a variable from $P$. Since $w R w'$ and $M,w\models\formulaRespectP$ from $M,w\models p$ it follows that $M,w'\models p,$ and $M,w\models\overline{p}$ implies $M,w'\models\overline p$. Therefore, $\sim$ respects $P$ as required.
\end{proof}

The above lemma together with the earlier Lemma~\ref{lemma:modal formulas invariant if sim respects variables} and the fact that $\var{\formulaRespectP}=\set{d_8}$ imply the following:

\begin{corollary}\label{corollary:if respect P satisfied than formulas lift to abstraction}
  Let $M$ be a reflexive $\formulaEQ$-model such that $M$ globally satisfies $\formulaRespectP$. Then for all formulas $\psi$ with $\var\psi\subseteq P\cup\set{d_8}$, and all worlds $w\in M$, we have that 
  $$M,w\models\psi\mathtext{ if and only if }\nicefrac{M}{\sim},[w]\models\psi.$$
  In particular, $\nicefrac M\sim$ globally satisfies $\formulaRespectP$.
\end{corollary}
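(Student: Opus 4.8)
The statement is just the composition of the two preceding lemmas, so the plan is to chain them, taking a little care with the variable sets involved. First I would apply Lemma~\ref{lemma:formula forces sim to respect P} to conclude that $\sim$ respects $P\cup\set{d_8}$ in $M$. The one point to flag here is that that lemma is phrased for reflexive $\formulaEQOneTwoStep$-models while the corollary only assumes $M\models\formulaEQ$; however, its proof never invokes $\formulaOneStep$ or $\formulaTwoStep$ --- it uses only that $M$ globally satisfies $\formulaRespectP$ together with the two edges $wRw'$ and $w'Rw$ furnished by $w\sim w'$ --- so it applies verbatim under the weaker hypothesis (equivalently, one may simply read $\formulaEQ$ for $\formulaEQOneTwoStep$ in the hypothesis of that lemma).

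Next I would invoke Lemma~\ref{lemma:modal formulas invariant if sim respects variables} with the set $P\cup\set{d_8}$ playing the role of the lemma's ``$P$'' (as usual, $d_8$ abbreviates its three Boolean encoding variables $d_8^a,d_8^b,d_8^c$, which we take to lie in $P$, so that the abstraction $\nicefrac{M}{\sim}$ of the earlier definition is exactly the model the lemma speaks about). The hypotheses of that lemma are met: $M$ is a reflexive $\formulaEQ$-model by assumption, and $\sim$ respects $P\cup\set{d_8}$ by the previous step. Its conclusion is precisely the asserted equivalence: for every $w\in M$ and every modal formula $\psi$ with $\var{\psi}\subseteq P\cup\set{d_8}$, we have $M,w\models\psi$ if and only if $\nicefrac{M}{\sim},[w]\models\psi$.

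Finally, for the ``in particular'' clause: since $\var{\formulaRespectP}=\set{d_8}\subseteq P\cup\set{d_8}$ and $M$ globally satisfies $\formulaRespectP$, the equivalence just established, applied at each $w\in M$, yields $\nicefrac{M}{\sim},[w]\models\formulaRespectP$ for every class $[w]$; as every world of $\nicefrac{M}{\sim}$ is of this form, $\nicefrac{M}{\sim}$ globally satisfies $\formulaRespectP$. I do not expect a genuine obstacle here --- the mathematical content sits entirely in Lemmas~\ref{lemma:modal formulas invariant if sim respects variables} and~\ref{lemma:formula forces sim to respect P} --- so the main thing to watch is the bookkeeping around the variable sets: widening from $P$ to $P\cup\set{d_8}$ when applying the invariance lemma (and checking that this is the set with respect to which $\nicefrac{M}{\sim}$ is formed), and reconciling the $\formulaEQOneTwoStep$ hypothesis of the respecting lemma with the weaker $\formulaEQ$ hypothesis available here.
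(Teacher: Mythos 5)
Your proposal is correct and matches the paper's (implicit) argument exactly: the corollary is stated as an immediate consequence of Lemma~\ref{lemma:formula forces sim to respect P} and Lemma~\ref{lemma:modal formulas invariant if sim respects variables} together with $\var{\formulaRespectP}=\set{d_8}$, which is precisely the chain you carry out. Your two bookkeeping remarks---that the proof of Lemma~\ref{lemma:formula forces sim to respect P} never uses $\formulaOneStep$ or $\formulaTwoStep$ and so goes through under the weaker $\formulaEQ$ hypothesis, and that the abstraction must be taken with the $d_8$-encoding variables included in the relevant variable set---are exactly the details the paper glosses over, and you resolve them in the intended way.
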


The variable $d_8$ also allows to express ``positive'' conditions of the grid, i.e., the existence of certain successor worlds, and to reason about direct successors of a world $w$ that are not $\sim$-equivalent to $w$ itself. In the sequel, we call successors like this \emph{non-symmetric successors of $w$}. Similarly, a \emph{non-reflexive successor} of $w$ is a direct successor $w'$ of $w$ with $w\neq w'$. This does not mean that $w'$ is an irreflexive world, but that it is one reachable from $w$ with an edge other than the reflexive loop. The following formula now ensures that worlds that are connected with a direct edge, but are not $\sim$-equivalent, have different values for $d_8$. 

\begin{definition}
 Let $\formulaExistsNontrivialSuccessor$ be defined as follows
$$\formulaExistsNontrivialSuccessor=\displaystyle\bigwedge_{d = 0}^7 \left( (d_8 = d) \implies \Diamond ( d_8=d + 2) \wedge \Diamond(d_8 = d + 3)\right) $$
\end{definition}

This formula expresses that a world $w$ with $d_8(w) = d$ has direct successors with $d_8 = d + 2$ and $d_8 = d + 3$. We will later identify the ``$+2$''/``$+3$''-successor with the ``upper''/``right'' neighbor in the grid. If additionally the model globally satisfies $\formulaRespectP$, then from neither of these successors, the world $w$ is reachable in one step: From the definition of $\formulaRespectP$, it follows that all direct successors of the two ones whose existence is forced by $\formulaExistsNontrivialSuccessor$ must have $d_8$-values out of the set $\set{d+2,d+4,d+5,d+3,d+5,d+6}$, none of which applies to the world $w$ itself. More generally, every world $w'\neq w$ reachable from $w$ with at most two steps has a different $d_8$-value than $w$.

In addition, the successors with $d_8$-values of $d+2$ and $d+3$ cannot be connected with a direct edge in models satisfying $\formulaRespectP$. Hence, each world has two successors such that all of the three involved worlds are $\sim$-inequivalent---in the abstraction, these worlds will give rise to three different equivalence classes.

It follows from the above that in $\formulaEQOneTwoStep$-models globally satisfying the formula $\formulaRespectP$, the formula $\bigwedge_{d=0}^7 ((d_8=d) \implies\Box ((d_8\neq d)\rightarrow\psi))$ is true in a world $w$ if and only if $\psi$ is true in all non-symmetric, non-reflexive successors of $w$.

\subsection{\CORE-hardness of global satisfiability}

We now show \CORE-hardness for the global satisfiability problem on reflexive frames that satisfy $\formulaEQOneTwoStep$. In itself, this result is not stronger than what was already established in~\cite{Hemaspaandra-PRICE-OF-UNIVERSALITY-NDJFC-1996}, except for the fact that our formula only uses the basic frame language, i.e., does not use equality. However, the real benefit of the result in this section will become apparent in the next section: The class that we define here allows us to easily reduce global to (local) satisfiability.

In Lemma~\ref{lemma:formula forces sim to respect P}, we have seen that if we can ensure that the formula $\formulaRespectP$ is \emph{globally} satisfied in a reflexive $\formulaEQOneTwoStep$-model, then $\sim$ respects $P$ and in this case Lemma~\ref{lemma:modal formulas invariant if sim respects variables} tells us that our abstraction is sound, i.e., preserves truth values of modal formulas. Since the formulas $\formulaRespectP$ and $\formulaExistsNontrivialSuccessor$ allow us to ensure that $\sim$ respects $P$ and that every world has the two distinct successors as in the grid frame, we therefore can use the construction from the previous section to prove \CORE-hardness in the case that we are able to enforce $\formulaRespectP\wedge\formulaExistsNontrivialSuccessor$ globally. 

Recall that $\ggrid$ remains $\CORE$-hard when the input is restricted to formulas $\psi$ with $\md\psi\leq 1$. We therefore only consider such inputs for $\ggrid$ from now on, and define our reduction as follows:

\begin{definition}
 Let $\psi$ be an input for $\ggrid$ with $\md\psi\leq 1$. Let $P=\var\psi$, and let $g(\psi)$ be defined inductively as follows:
 \begin{itemize}
  \item If $\psi$ is a variable $p,$ then $g(\psi)=p$, 
  \item $g(\neg\psi)=\neg g(\psi)$,
  \item $g(\psi\wedge\xi)=g(\psi)\wedge g(\xi)$,
  \item $g(\Box\psi)=
\bigwedge_{d=0}^7 ((d_8=d)
\implies\Box ((d_8\neq d)\rightarrow g(\psi)))$.
 \end{itemize}
 The reduction $f$ is now defined as $f(\psi)=g(\psi)\wedge\formulaRespectP\wedge\formulaExistsNontrivialSuccessor$.
\end{definition}

The only non-obvious part of the definition is the handling of the $\Box$-operator. As argued above, the translation of $\Box\psi$ requires $\psi$ to be true in all non-reflexive, non-symmetric successor worlds of the current one. This will be crucial when we consider abstractions of models: The non-symmetric successors of a world $w$ in a model $M$ directly correspond to the non-reflexive successors of the class $[w]$ in the model $\nicefrac M\sim$. We now prove that the reduction is correct (see Appendix).

\begin{theorem}\label{theorem:main reduction correctness}
 Let $\psi$ be an instance of $\ggrid$ with $\md\psi\leq 1$. Then $\psi$ is a positive instance of $\ggrid$ if and only if $f(\psi)$ is globally satisfiable on a reflexive $\formulaEQOneTwoStep$-model.
\end{theorem}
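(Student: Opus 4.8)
The plan is to prove the two directions of the equivalence essentially by hand, the forward direction by writing down the required model, and the backward direction by reading a grid model off an arbitrary witness.

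For the ``only if'' direction, suppose $M=(\mathbb N\times\mathbb N,R,\pi)$ is a grid model globally satisfying $\psi$. I would take $M'$ to be the model on the \emph{reflexive closure} of the grid frame, with $\pi$ unchanged on the variables in $P=\var\psi$, the three variables encoding $d_8$ chosen so that $d_8(i,j)=(2i+3j)\bmod 8$, and all other variables false. Since the reflexive closure of the grid is acyclic apart from the loops, $x\sim y$ holds in it only when $x=y$, so $\formulaEQ$ is trivial; a pigeonhole argument over the (at most three, two of them non-reflexive) successors of each world shows $M'\models\formulaOneStep\wedge\formulaTwoStep$, hence $M'\models\formulaEQOneTwoStep$. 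Because $(i+1,j)$ and $(i,j+1)$ carry $d_8$-values $d+2$ and $d+3$ when $d_8(i,j)=d$, and the only successor with $d_8$-value $d$ is the world itself, $M'$ globally satisfies $\formulaExistsNontrivialSuccessor$ and $\formulaRespectP$. Finally, an induction on subformulas $\psi'$ of $\psi$, using that $g(\Box\xi)$ demands $g(\xi)$ exactly at the successors with $d_8$-value different from the current one---which are exactly $(i+1,j)$ and $(i,j+1)$---gives $M',w\models g(\psi')$ iff $M,w\models\psi'$, so $M'$ globally satisfies $g(\psi)$ and therefore $f(\psi)$.

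For the ``if'' direction, let $N$ be a reflexive $\formulaEQOneTwoStep$-model globally satisfying $f(\psi)$. I would pass to the abstraction $\nicefrac N\sim$, which by Corollary~\ref{corollary:if respect P satisfied than formulas lift to abstraction} globally satisfies $g(\psi)\wedge\formulaRespectP\wedge\formulaExistsNontrivialSuccessor$, and which by Lemma~\ref{lemma:structucal properties of frames satisfying basic formula} is reflexive, has at most two non-reflexive successors at each world, and at most three worlds reachable by a reflexive-edge-free path of length two from each world (note that in $\nicefrac N\sim$, $[u]\sim[v]$ holds iff $[u]=[v]$). Combining $\formulaRespectP$ and $\formulaExistsNontrivialSuccessor$ with the ``at most two non-reflexive successors'' property, each world $[v]$ of $\nicefrac N\sim$ has exactly one non-reflexive successor with $d_8$-value $d_8([v])+2$ and exactly one with $d_8([v])+3$, and these are its only successors with $d_8$-value $\neq d_8([v])$; call them the $+2$- and $+3$-successor. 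Define $h\colon\mathbb N\times\mathbb N\to\nicefrac N\sim$ by induction on $i+j$: $h(0,0)$ arbitrary, $h(i+1,j)$ the $+2$-successor of $h(i,j)$, $h(i,j+1)$ the $+3$-successor of $h(i,j)$. The step that needs care---and which I expect to be the main obstacle---is well-definedness for $i,j\ge 1$, where $h(i,j)$ is obtained both as the $+2$-successor of $h(i-1,j)$ and as the $+3$-successor of $h(i,j-1)$; these equal, respectively, the ``$+3$-then-$+2$'' and ``$+2$-then-$+3$'' successor of $h(i-1,j-1)$, and they coincide because the four grandchildren of $h(i-1,j-1)$ along reflexive-edge-free paths carry $d_8$-values $d+4,d+5,d+5,d+6$ (with $d=d_8(h(i-1,j-1))$), so the ``at most three'' property of Lemma~\ref{lemma:structucal properties of frames satisfying basic formula} forces the two $d+5$-worlds to be identical.

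To finish, I would let $M^{\star}$ be the grid model on $\mathbb N\times\mathbb N$ whose valuation makes $p\in P$ true at $(i,j)$ exactly when $p$ is true at $h(i,j)$ in $\nicefrac N\sim$ (all other variables false), and prove by induction on subformulas $\psi'$ of $\psi$ that $M^{\star},(i,j)\models\psi'$ iff $\nicefrac N\sim,h(i,j)\models g(\psi')$. The only nontrivial case, $\psi'=\Box\xi$, works because $g(\Box\xi)$ holds at $h(i,j)$ iff $g(\xi)$ holds at all successors of $h(i,j)$ with $d_8$-value $\neq d_8(h(i,j))$, and by the previous paragraph those are precisely $h(i+1,j)$ and $h(i,j+1)$, i.e., the two grid-successors of $(i,j)$. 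Since $\nicefrac N\sim$ globally satisfies $g(\psi)$, we get $M^{\star},(i,j)\models\psi$ for all $(i,j)$, so $M^{\star}$ witnesses that $\psi$ is a positive instance of $\ggrid$.
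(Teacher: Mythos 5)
Your proposal is correct and follows essentially the same route as the paper: the forward direction builds the reflexive closure of the grid with $d_8$-labels linear in the coordinates, and the backward direction uses the abstraction together with Lemma~\ref{lemma:structucal properties of frames satisfying basic formula}, Corollary~\ref{corollary:if respect P satisfied than formulas lift to abstraction}, the uniqueness of the $+2$- and $+3$-successors, and the forced coincidence of the two $d+5$-grandchildren to unfold into a grid model. The only differences are cosmetic: you swap the roles of $+2$ and $+3$ (i.e., use $2i+3j$ instead of $3i+2j$), and you unfold directly via the map $h$ into $\nicefrac{N}{\sim}$ rather than first forming the intermediate model $M^0\gd$ with reflexive edges removed, which is an equivalent packaging of the same argument.
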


\begin{proof}
In the following, let $P$ again be the set of variables appearing in $\psi$. First assume that $\psi$ is a positive instance of $\ggrid$, i.e., there is a grid model  $M=(\mathbb{N}\times\mathbb{N},R,\pi)$ such that $M,(i,j)\models\psi$ for all $i,j\in\mathbb N$. We define the model $\hat M$ as the one obtained from $M$ as follows: $\hat M=(\mathbb{N}\times\mathbb{N},\hat R,\hat\pi),$ where

\begin{itemize}
 \item $\hat R=R\cup\set{((i,j),(i,j))\ \vert\ i,j\in\mathbb N},$ i.e., $\hat R$ is the reflexive closure of $R,$
 \item $\hat\pi$ agrees with $\pi$ on $\var{\psi}$, and $M,(i,j)\models(d_8=(3i+2j))$.
\end{itemize}

Again, $d_8$ can easily be expressed using three propositional variables. It is immediate that $\hat M$ is reflexive and satisfies $\formulaEQOneTwoStep$ (note that in $\hat M$, we have that $w\sim w'$ if and only $w=w'$). By choice of $\hat\pi$, it is also obvious that $\hat M$ globally satisfies $\models\formulaRespectP$ and $\formulaExistsNontrivialSuccessor$. 

It remains to show that $\hat M$ globally satisfies $g(\psi)$. Since $M$ globally satisfies $\psi$, it suffices to show that for all subformulas $\chi$ of $\psi$, and for all $i,j\in\mathbb N,$ it holds that $M,(i,j)\models\chi$ if and only if $\hat M,(i,j)\models g(\chi).$ We prove this by induction on the construction of $\chi$. Clearly, the only non-trivial case is when $\chi=\Box\xi$ for some $\xi$. Due to induction and since $M$ is based on a grid frame, the following holds:

$$
\begin{array}{llll}
M,(i,j)\models\Box\xi\ \ \ & \mathtext{iff} &\ M,(i+1,j)\models\xi\mathtext{ and }M,(i,j+1)\models\xi \\
& \mathtext{iff} &  \hat M,(i+1,j)\models g(\xi)\mathtext{ and }\hat M(i,j+1)\models g(\xi) \\
& \mathtext{iff} & \hat M,(i,j) \models\Box((d_8\neq d_8((i,j)))\implies g(\xi))\\
& \mathtext{iff} & \hat M,(i,j)\models g(\Box\xi).
\end{array}
$$

Hence we know that $\hat M,(0,0)$ globally satisfies  $f(\psi)$, as required.

For the other direction, assume that $f(\psi)$ is globally satisfied on a reflexive $\formulaEQOneTwoStep$-model $M=(W,R,\pi)$.  From $M$, we now obtain a grid model as follows: We first consider the abstraction of $M$, which, since $M$ is a $\formulaEQOneTwoStep$-model and $M$ globally satisfies $\formulaExistsNontrivialSuccessor$, is essentially grid-like due to Lemma~\ref{lemma:structucal properties of frames satisfying basic formula}. Since $M$ also globally satisfies $\formulaRespectP$, this abstraction is sound, i.e., still globally satisfies $g(\psi)$. We can then easily modify $\nicefrac M\sim$ to obtain a model that in fact is a grid and globally satisfies $\psi$. More formally, let $M^0\gd$ be defined as $(W^0\gd,R^0\gd,\pi^0\gd)$, where 

\begin{itemize}
 \item $W^0\gd=\nicefrac{W}{\sim},$
 \item $R^0\gd=\set{([w],[w'])\in\nicefrac{R}{\sim}\ \vert\ d_8([w])\neq d_8([w'])}$,
 \item $\pi^0=\nicefrac\pi\sim$.
\end{itemize}

The above choice of $R^0$ is well-defined, since due to Lemma~\ref{lemma:formula forces sim to respect P}, we know that the value of $d_8$ does not depend on the choice of the representative of a $\sim$-equivalence class $[w]$. By construction, since $M$ globally satisfies $\formulaRespectP$, we know that $M^0\gd$ is exactly the model $\nicefrac M\sim$ with the reflexive edges removed. Since due to Corollary~\ref{corollary:if respect P satisfied than formulas lift to abstraction}, we know that $\nicefrac M\sim$ globally satisfies $\formulaRespectP$, and satisfaction of this formula clearly is invariant under removing reflexive edges, it follows that $M^0\gd$ also globally satisfies $\formulaRespectP$.

We now prove that for all subformulas $\chi$ of $\psi$ and all worlds $w\in M$, we have that $M,w\models g(\chi)$ if and only if $M^0\gd,[w]\models\chi$. Since $M$ globally satisfies $\formulaRespectP$, we know from Corollary~\ref{corollary:if respect P satisfied than formulas lift to abstraction} that $M,w\models g(\chi)$ if and only if $\nicefrac M\sim,[w]\models g(\chi)$. (Note that $\var{g(\chi)}=\var{\chi}\cup\set{d_8}\subseteq P\cup\set{d_8}$.) It therefore suffices to prove that $\nicefrac M\sim,[w]\models g(\chi)$ iff $M^0\gd,[w]\models \chi$. We prove the claim by induction on $\chi$. Since the only difference between $M^0\gd$ and $\nicefrac M\sim$ is the set of edges between worlds, the only interesting case is when  $\chi=\Box\xi$. Due to Corollary~\ref{corollary:if respect P satisfied than formulas lift to abstraction}, $\nicefrac M\sim$ globally satisfies $\formulaRespectP$. Hence the non-reflexive successors of a world $[w]$ in $\nicefrac M\sim$ are exactly those successors $[w']$ of $[w]$ with $d_8([w'])\neq d_8([w])$ (recall that $d_8(w)$ only depends on the equivalence class $[w]$ due to Lemma~\ref{lemma:formula forces sim to respect P}).

Since $g(\chi)=\bigwedge_{d=0}^7((d_8=d)\implies\Box((d_8\neq d)\rightarrow g(\xi)))$, we have the following:

\medskip

\begin{tabular}{rl}
                       & $\nicefrac M\sim,[w]\models g(\chi)$ \\
                   iff & $\nicefrac M\sim,[w]\models\bigwedge_{d=0}^7(d_8=d)\rightarrow\Box((d_8\neq d)\rightarrow g(\xi)))$ \\
                   iff & $\nicefrac M\sim,[w']\models g(\xi)$ for all $[w']\neq[w]$ with $([w],[w'])\in\nicefrac R\sim$ \\
       iff (induction) & $M^0\gd,[w']\models\xi$ for all $[w']\neq[w]$ with $([w],[w'])\in\nicefrac R\sim$ \\
iff (def. of $M^0\gd$) & $M^0\gd,[w']\models\xi$ for all $[w']$ with $([w],[w'])\in R^0\gd$ \\
                   iff & $M^0\gd,[w]\models\Box\xi$ \\
                   iff & $M^0\gd,[w]\models\chi$. \\
\end{tabular}

\medskip

This completes the proof of the above claim. Since $M$ globally satisfies $g(\psi)$, this implies that $M^0\gd$ globally satisfies $\psi$. We now construct from $M^0\gd$ a grid-model $M\gd$ that still globally satisfies $\psi$ as required. Recall that $M$ also satisfies the first-order formula $\formulaEQOneTwoStep$, and that $M^0\gd$ is obtained form $\nicefrac M\sim$ by removing edges. Therefore, Lemma~\ref{lemma:structucal properties of frames satisfying basic formula} implies that from each world in $M^0\gd$ there are at most two worlds reachable in one step, and at most three worlds reachable in two steps. Further, since $M$ globally satisfies $\formulaExistsNontrivialSuccessor$, we know that each world in $\nicefrac M\sim$ has at least two distinct successors. Since $M^0\gd$ is obtained from $M$ by removing reflexive edges, $M^0\gd$ also has this property.

The values of $d_8$ in the individual worlds induces an ordering on the direct successors of a world $[w]$ in $M^0\gd$. First, recall that due to Lemma~\ref{lemma:formula forces sim to respect P}, the values of $d_8$ depend only on the equivalence of the worlds, hence we may use $d_8[w]$ to denote the $d_8$-value of \emph{all} worlds in the equivalence class $[w]$. Due to the properties of the abstraction, $[w]$ has a unique direct successor world $[w]^\uparrow$ with $d_8([w^\uparrow])=d_8([w])+2$, and a unique direct successor world $[w]^\rightarrow$ with $d_8([w^\rightarrow])=d_8([w])+3$. In addition, since due to Lemma~\ref{lemma:structucal properties of frames satisfying basic formula}, $[w]$ can only reach three worlds on a path of length two that does not use any reflexive edges, we know that $\card{\set{[w]^{\uparrow\rightarrow},[w]^{\uparrow\uparrow},[w]^{\rightarrow\rightarrow},[w]^{\rightarrow\uparrow}}}=3$, hence two of these worlds must be the same. Due to the distribution of the $d_8$-values, it follows that $[w]^{\uparrow\rightarrow}=[w]^{\rightarrow\uparrow}$, since these are the only two of the mentioned worlds that share the same $d_8$-value, namely $d_8([w])+5$.

Thus $M^0\gd$ can be written as a grid model using standard unfolding techniques: We define the grid model $M\gd$ as follows:

\begin{itemize}
 \item the world $(0,0)$ is a copy of some world $[w]$ of $M^0\gd$.
 \item if $(i,j)$ is a copy of the world $[w]_{i,j}$, then let the worlds $(i+1,j)$ and $(i,j+1)$ be copies of the worlds $[w]_{i,j}^\rightarrow$ and $[w]_{i,j}^\uparrow$ of $M^0\gd$, and ensure that for $(i+1,j+1)$, the same copy of $[w]^{\rightarrow\uparrow}=[w]^{\uparrow\rightarrow}$ is used.
\end{itemize}

It is clear that the set of modally satisfied formulas does not change in the step from $M^0\gd$ to $M\gd$, and that $M\gd$ is indeed a grid model. Therefore, $M\gd$ is a grid model that globally satisfies $\psi$ as required.
\end{proof}

We mention that one can easily to use the first-order formula to force the models to be reflexive, using the clause $xRx$. However, to be able to prove undecidability for satisfiability instead of global satisfiability, it is crucial to leave open the possibility of non-reflexive worlds, as we will see in the next section.

\subsection{Removing globalness}

The construction in the preceding section showed hardness for global satisfiability for reflexive $\formulaEQOneTwoStep$ frames. To obtain our $\CORE$-hardness result for (local) satisfiability, we now express this global quantification with only the first-order frame language and the modal language.

The main idea of the proof is the following: The construction forces the existence of a ``universal'' world $w_u$, i.e., a world that has an outgoing edge to every other world in the model. Since this is an ``existential'' and not a ``forbidden subgraph'' property, we cannot express this as a universal first-order formula directly. We therefore use the following construction: We require that for every pair of a world $w_u$ that is not reflexive, and every world $w$ that has an incoming edge, there is an edge from $w_u$ to $w$. This ensures that if the model contains an irreflexive world $w_u$, then $w_u$ is universal at least with respect to worlds that can be reached from any other world at all. In particular, $w_u$ is universal with respect to the submodel rooted at $w_u$. Additionally, we require that any world that has an incoming edge is reflexive. We therefore have established that if there is a world $w_u$ that is irreflexive, then every world reachable from $w_u$ in any number of steps is connected to $w_u$ directly, and every such world is reflexive. 

These conditions can be enforced with the following formula:

\begin{definition}
Let $\formulaUniversalWorld$ be the universal first-order formula 
$$(x R y\implies y R y)\wedge (\overline{w_u R w_u} \implies (x R y \implies w_u R y)).$$ 
\end{definition}

The existence of an irreflexive world $w_u$ can easily be enforced with the modal formula $u\wedge\Box\overline u$, where $u$ is a new variable. We then enforce the formula $\formulaEQOneTwoStep$ constructed in the previous section only on reflexive worlds, and can thus identify the ``reflexive part'' of a model with a model of the type as considered in the previous section. In particular, we know that global satisfiability of a formula of the form $f(\psi)$ on the ``reflexive part'' of our models is \CORE-hard, where $f$ is the function used in the reduction from Theorem~\ref{theorem:main reduction correctness}. We then use the universal world $w_u$ to express the global satisfiability problem with a single $\Box$-operator.

We therefore obtain the following theorem:

\begin{theorem}\label{theorem:core hardness for grid}
 There exists a universal first-order formula $\formulaFinal$ over the basic frame language such that $\sat{\K\formulaFinal}$ is $\CORE$-hard.
\end{theorem}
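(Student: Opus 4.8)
The plan is to reduce $\ggrid$ to $\sat{\K\formulaFinal}$ for a fixed universal formula $\formulaFinal$ over the basic frame language, reusing the reduction $f$ and the correctness result of Theorem~\ref{theorem:main reduction correctness}; since $\ggrid$ is $\CORE$-hard by Theorem~\ref{theorem:ggrid is core hard}, this yields the claim. I would take $\formulaFinal := \formulaUniversalWorld \wedge \formulaEQOneTwoStepRefl$, where $\formulaEQOneTwoStepRefl$ is the \emph{relativization of $\formulaEQOneTwoStep$ to reflexive worlds}: the formula obtained from $\formulaEQOneTwoStep$ by adding to the antecedent, for every universally quantified variable $v$, the guard $vRv$. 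Since $x\sim y$ is an abbreviation built only from $R$, both conjuncts, and hence $\formulaFinal$, use only the basic frame language.

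Given an instance $\psi$ of $\ggrid$ with $\md\psi\leq 1$, the reduction outputs
$$h(\psi):=u\wedge\Box\overline u\wedge\Diamond\overline u\wedge\Box f(\psi),$$
where $u$ is a fresh variable. The idea is that in any $\formulaFinal$-model $M$ and world $w_u$ with $M,w_u\models h(\psi)$, the conjunct $u\wedge\Box\overline u$ forces $w_u$ to be irreflexive (it satisfies $u$ while all its successors satisfy $\overline u$, so no successor of $w_u$ equals $w_u$); then $\formulaUniversalWorld$ guarantees that every world possessing an incoming edge is reflexive and is a direct successor of $w_u$. Consequently the set $N'$ of successors of $w_u$ is reflexive, closed under taking successors (hence forms a generated submodel), and, via $\Diamond\overline u$, nonempty. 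Since all worlds of $N'$ are reflexive, satisfaction of $\formulaEQOneTwoStepRefl$ in $M$ implies satisfaction of $\formulaEQOneTwoStep$ on the induced subframe $N'$, so $N'$ is a reflexive $\formulaEQOneTwoStep$-model; and $\Box f(\psi)$ together with preservation of modal truth under generated submodels forces $f(\psi)$ to hold \emph{globally} in $N'$. Theorem~\ref{theorem:main reduction correctness} then gives that $\psi$ is a positive instance of $\ggrid$.

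For the converse, assuming $\psi$ is a positive instance, Theorem~\ref{theorem:main reduction correctness} provides a reflexive $\formulaEQOneTwoStep$-model $N$ globally satisfying $f(\psi)$; I would form $M$ by adding a fresh irreflexive world $w_u$ with an edge into every world of $N$ and no incoming edges, and make $u$ true exactly at $w_u$. One verifies that $M\models\formulaUniversalWorld$ (the only irreflexive world is $w_u$, which points to every world with an incoming edge, while all edge targets are reflexive) and $M\models\formulaEQOneTwoStepRefl$ (the reflexive worlds of $M$ are exactly those of $N$, and the subframe they induce is that of $N$), and that $M,w_u\models h(\psi)$, using that $N$ is a generated submodel of $M$ so that truth of $f(\psi)$ is unchanged. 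As $h$ is clearly polynomial-time computable, this establishes $\CORE$-hardness of $\sat{\K\formulaFinal}$.

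The step requiring the most care is the structural analysis in the forward direction: deducing, from the two conjuncts of $\formulaUniversalWorld$ plus the irreflexivity of $w_u$, that $N'$ is precisely the set of worlds with an incoming edge, that this set is reflexive and closed under successors (so that generated-submodel arguments apply and modal truth transfers between $M$ and $N'$), and that it is nonempty --- the latter being exactly the reason for the otherwise-redundant conjunct $\Diamond\overline u$. A secondary subtlety is to check that the edges emanating from $w_u$ never create new $\sim$-pairs and never falsify $\formulaEQ$ (a conjunct of $\formulaEQOneTwoStepRefl$): this holds because $w_u$ has no incoming edge, so it is $\sim$-related to nothing and never occurs as a constrained endpoint of $\formulaEQOneTwoStepRefl$, which ranges only over reflexive worlds.
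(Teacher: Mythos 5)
Your proposal is correct and follows essentially the same route as the paper: the same formula $\formulaFinal=\formulaUniversalWorld\wedge\formulaEQOneTwoStepRefl$, the same reduction $u\wedge\Box\overline u\wedge\Box f(\psi)$ via an irreflexive ``universal'' world whose reflexive generated submodel satisfies $\formulaEQOneTwoStep$, and the same appeal to Theorem~\ref{theorem:main reduction correctness}. Your extra conjunct $\Diamond\overline u$ is a sensible refinement: it rules out the degenerate case of a successor-free irreflexive world (a singleton irreflexive model satisfies $u\wedge\Box\overline u\wedge\Box f(\psi)$ and $\formulaFinal$ for any $\psi$), a corner case the paper's own argument passes over silently by allowing the extracted submodel to be empty.
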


\begin{proof}
 From Theorems~\ref{theorem:ggrid is core hard} and~\ref{theorem:main reduction correctness}, we know that the global satisfiability for formulas of the form $f(\psi)$ on reflexive $\formulaEQOneTwoStep$-models is \CORE-hard.

In order to ``simulate'' global satisfiability,
we want to add a ``universal world'' to our models, i.e., a world that is connected to every world in the model except itself. The reflexivity of worlds will be used to distinguish between the universal world and other worlds in the model. As mentioned above, the effect of the formula $\formulaUniversalWorld$ is that any world $w_u$ which is non-reflexive does not have a predecessor, and is connected in one step to every world that that does have one. In particular, $w_u$ is a ``universal world'' with respect to the submodel generated by $w_u$.

Let $\formulaEQOneTwoStep \equiv \forall x_1 \cdots \forall x_k \formulaEQOneTwoStepKernel(x_1, \ldots, x_k)$, where $\formulaEQOneTwoStepKernel$ is quantifier-free. To enforce $\formulaEQOneTwoStep$, but only on the submodel containing all reflexive worlds, we define

$$\formulaEQOneTwoStepRefl:=(x_1 R x_1 \wedge \dots \wedge x_k R x_k )\implies\formulaEQOneTwoStepKernel(x_1,\dots,x_k).$$

We can now define the complete universal first-order formula $\formulaFinal$ such that $\sat{\K\formulaFinal}$ is \CORE-hard as follows:

$$\formulaFinal:=\formulaUniversalWorld\wedge\formulaEQOneTwoStepRefl.$$

We show that a modal formula $\psi$ is globally satisfiable on a reflexive $\formulaEQOneTwoStep$-frame if and only if $u\wedge\Box\overline{u}\wedge\Box\psi$ is satisfiable on a $\formulaFinal$-model. Since deciding whether the former holds is \CORE-hard due to Theorems~\ref{theorem:ggrid is core hard} and~\ref{theorem:main reduction correctness}, this proves the theorem.

First assume that $M=(W,R,\pi)$ is a reflexive $\formulaEQOneTwoStep$-model and that $M$ globally satisfies $\psi$. We define a model $\hat M=(\hat W,\hat R,\hat\pi)$ such that

\begin{itemize}
 \item $\hat W=W\cup\set{w_u},$ where $w_u$ is a new world,
 \item $\hat R=R\cup\set{(w_u,w)\ \vert\ w\in W},$
 \item $\hat\pi=\pi,$ except that the new variable $u$ is true at $w_u$ and
 false at every world other than $w_u$, and the truth values of the remaining propositional variables at $w_u$ are arbitrary.
\end{itemize}

It is obvious from the construction that $\hat M$ is based on a $\formulaFinal$-frame and that $\hat M,w_u\models u\wedge\Box\overline{u}\wedge\Box\psi$.

For the converse, let $M=(W,R,\pi)$ be a model based on a $\formulaFinal$-frame, and let $w_u\in W$ be such that $M,w_u\models u\wedge\Box\overline{u}\wedge\Box\psi$. Note that since $M,w_u\models u\wedge \Box\overline u$, $w_u$ clearly is irreflexive ($u$ holds in $w_u$, but not in any successor of $w_u$, hence $w_u$ cannot be one of these successors). Now let $\hat M=(\hat W,\hat R,\hat\pi)$ be defined as follows:

\begin{itemize}
 \item $\hat W=\set{w\ \vert\ (w_u,w)\in R}$,
 \item $\hat R=R\cap (\hat W\times\hat W)$,
 \item $\hat\pi=\pi$ restricted to $\hat W$.
\end{itemize}

Note that $\hat M$ is reflexive by the construction of $\formulaFinal$, and hence, since $M\models\formulaEQOneTwoStepRefl$, and satisfaction of universal first-order formulas is invariant under deleting worlds, it follows that $\hat M$ is based on a reflexive $\formulaEQOneTwoStep$-frame. It remains to prove that $\hat M,w\models\psi$ for all $w\in\hat W$. Since $M,w_u\models\Box\psi$, it follows that $M,w\models\psi$ for all $w\in\hat W$. It therefore it suffices to show that for all subformulas $\chi$ of $\psi$ and for all $w\in\hat W$, that $M,w\models\chi$ if and only if $\hat M,w\models\chi$. We show the claim by induction on the construction of $\chi$, and the only non-trivial case is when $\chi=\Box\xi$ for a modal formula $\xi$.

First assume that $M,w\models\Box\xi$, and let $w'$ be a successor of $w$ in $\hat M$. Since $\hat R\subseteq R$, it follows that $w'$ is also a successor of $w$ in $M$. Since $M,w\models\Box\xi$, this implies that $M,w'\models\xi$. Due to induction, it follows that $\hat M,w'\models\xi$ as required.

For the converse, assume that $M,w\nmodels\Box\xi$. Then there is some world $w'\in W$ such that hat $w R w'$ and $M,w'\nmodels\xi$. Since $wRw'$, and the world $w_u$ is irreflexive in $M$ due to the above, and $M\models\formulaUniversalWorld$, it follows that $w_u R w'$ is an edge in $M$. Therefore, by definition it holds that $w'\in\hat W$, and due to the definition of $\hat R$ we have that $w \hat R w'$ in $\hat M$. Since due to induction we know that $\hat M,w'\nmodels\xi$, it follows that $\hat M,w\nmodels\Box\xi$ as required. This completes the induction and therefore the proof of the theorem.
\end{proof}

\end{document}